\documentclass{article}

\usepackage{amsmath,amsfonts,amssymb,amsthm}
\usepackage[round]{natbib}
\usepackage{mathtools} 
\usepackage{color}
\usepackage[margin=1.5in]{geometry}
\usepackage{comment}
\usepackage{multirow}
\usepackage{verbatim}
\usepackage{bbm}
\usepackage{booktabs}
\usepackage{float}
\usepackage{graphicx}
\usepackage{paralist} 
\usepackage{etoolbox}
\usepackage{authblk}

\newcommand{\p}{\mathbb{P}}
\newtheorem{theorem}{Theorem}

\newcommand{\Bern}{\text{Bern}}


\title{Recalibration of Predictive Models as \\Approximate Probabilistic Updates}

\author{
  Evan T. R. Rosenman \\ \vspace{-3mm}
  \emph{Data Science Initiative}\\\emph{Harvard University}
    \and
  Santiago Olivella \\ \vspace{-3mm}
  \emph{Department of Political Science} \\\emph{University of North Carolina, Chapel Hill}
}


\begin{document}

\maketitle

\begin{abstract}
The output of predictive models is routinely recalibrated by reconciling low-level predictions with known derived quantities defined at higher levels of aggregation. For example, models predicting turnout probabilities at the individual level in U.S. elections can be adjusted so that their aggregation matches the observed vote totals in each state, thus producing better calibrated predictions. In this research note, we provide theoretical grounding for one of the most commonly used recalibration strategies, known colloquially as the ``logit shift''. Typically cast as a heuristic optimization problem (whereby an adjustment is found such that it minimizes the difference between aggregated predictions and the target totals), we show that the ``logit shift'' in fact offers a fast and accurate approximation to a principled, but often computationally impractical adjustment strategy: computing the posterior prediction probabilities, conditional on the target totals. After deriving analytical bounds on the quality of the approximation, we illustrate the accuracy of the approach using Monte Carlo simulations. The simulations also confirm analytical results regarding scenarios in which users of the simple logit shift can expect it to perform best --- namely, when the aggregated targets are comprised of many individual predictions, and when the distribution of true probabilities is symmetric and tight around 0.5. \\
\end{abstract}

\section{Problem Description}

A common problem in predictive modeling is that of calibrating probabilities to observed totals. For example, an analyst may obtain individual-level scores $p_i \in (0, 1), i = 1, \dots, N,$ to estimate the probability that each of the $N$ registered voters in a particular voting precinct will support the Democratic candidate in an upcoming election. 
After the election occurs, the analyst can observe the total number of Democratic votes, $D$, cast among the subset $\mathcal{V} \subset \{1, \dots, N\}$ of registered voters who cast a ballot. But she cannot observe individual-level outcomes due to the secret ballot. In the absence of perfect prediction, the analyst will find that $\sum_{i \in \mathcal{V}} p_i \neq D$. She must then decide how to compute recalibrated scores, $\tilde p_i$, to better reflect the realized electoral outcome. 

This practical problem has direct implications for public opinion research. For example, \cite{ghitza2020voter} recalibrate their MRP estimates of voter support levels after an election to match county-level totals, while \cite{schwenzfeierPolMeth} proposes using the magnitude of the calibration to estimate non-response bias in public opinion polls. The problem is also of great importance in campaign work. Campaigns frequently seek to target voters who are most likely to have supported their party in the prior presidential election. Estimates of prior party support may also serve as predictor variables in models estimating support in successive elections. Recalibrating the scores to match observed outcomes is thus a crucial step to improve the scores' accuracy and bolster future electioneering. 
 
A common heuristic solution to the recalibration problem is the use of the ``uniform swing'' \citep{butler1951appendix} on the logit scale. This approach is simple: first, one defines the function 
\[ h(\alpha) = \sum_{i \in \mathcal{V}} \frac{1}{1 + \frac{1-p_i}{p_i} \alpha} \,,\] 
and, having observed a total $D$, one finds the $\alpha$ that satisfies the equation 
\begin{equation}\label{eq:basicSwing}
    h(\alpha)  = D\,.
\end{equation}
The function $h(\cdot)$ is monotonic in $\alpha$, so Equation \ref{eq:basicSwing} can be solved in logarithmic time using binary search. The updated scores are then computed as 
\[ \tilde p_i =  \frac{1}{1 + \frac{1-p_i}{p_i} \alpha} \,. \]
Solving Eq.~\ref{eq:basicSwing} is equivalent to finding the set of probabilities $\tilde p_i$ which sum to $D$ and minimize the Kullback–Leibler divergence \citep{kullback1951information} with the distribution induced by the original scores $p_i$. 
Moreover, if the $p_i$ are defined based on a logistic regression, then this update is equivalent to shifting the intercept in the model by $\log(\alpha)$. For more details on these characterizations, see the Appendix, Section \ref{sec:char}.

Examples of this simple recalibration strategy are given by \cite{ghitza2013deep}, \cite{hanrettyetal2016}, and \cite{ghitza2020voter}. This procedure is also familiarly referred to by campaign workers as the ``logit shift".\footnote{The term ``logit swing" is also commonly used.}

In this research note, we provide analytical justification for the logit shift. First, we introduce an alternative procedure for score updating, which simply computes the updated scores as posterior probabilities, conditional on the target totals. In this procedure, we assume the original scores $p_i$ capture a kind of prior Democratic support probability, while the updated scores $\tilde p_i$ reflect the conditional Democratic voting probability given observed outcomes. Next, we show that this second, more principled approach is well approximated by the logit shift in large samples. We demonstrate this result analytically and illustrate it in a small simulation study. Finally, we discuss potential extensions to cases where a uniform swing is insufficient to capture observed electoral dynamics. 



\section{Recalibration as a posterior update}

To motivate the posterior update approach, we introduce some additional notation. We define each voter's choice as a binary variable $W_i \in \{0, 1\}$, where $W_i = 1$ signifies a Democratic vote and $W_i = 0$ signifies a Republican vote (we suppose a two-candidate election for simplicity). The $W_i$ are modeled as independent Bernoulli random variables, where $W_i \sim \Bern(p_i)$. In other words, the $p_i = P(W_i = 1)$ can be thought of as the prior, unconditional probability of casting a Democratic vote. 
In this model, it is straightforward to approach score recalibration by defining a new set of updated scores, $\{p_i^{\star}\}$, using the following conditional probability (which automatically sum to $D$ over voters $i$):

\begin{align}
\begin{split}
p_i^{\star} &= \p \left( W_i = 1 \,\middle\vert\, \sum_{j \in \mathcal{V}} W_j = D \right)\\
&=\frac{\p \left(W_i = 1, \sum_{j \in \mathcal{V}} W_j = D\right)}{\p\left(\sum_{j \in \mathcal{V}} W_j = D\right)}\\ 
&= p_i \times \xi_i
\end{split}
\label{eq:posterior}
\end{align}
where $\xi_i=\frac{\p\left( \sum_{j \neq i} W_j  = D - 1\right)}{\p\left(\sum_j W_{j \in \mathcal{V}} = D\right)}$
is a ratio of two Poisson-Binomial probabilities --- that is, probabilities over the sum of independent \emph{but not identically distributed} Bernoulli random variables \citep{chen1997statistical}. Explicit computation of the $p_i^{\star}$ is quite challenging, as efficient computation of Poisson-Binomial probabilities is extremely computationally demanding at even moderate sample sizes, despite substantial recent advances in the literature \citep{olivella2017poisbinom, junge2020package}. To compute the $p_i^{\star}$, we would need to compute one unique Poisson-Binomial probability per unit in the population. Hence, if the number of actual voters $|\mathcal{V}|$ were even modestly large, it would be computationally infeasible to obtain these exact posterior probabilities. 

\section{Logit shift approximates the correct posterior}

\subsection{Preliminaries}

In this section, we show analytically why the logit shift is a good approximation to the general posterior update in Eq.~\ref{eq:posterior}. 
To do so, we begin by defining two terms, the ratio $\phi_i = \frac{\p\left(\sum_{j \neq i} W_j = D\right)}{\p\left(\sum_{j \neq i} W_j = D-1\right)}$, and the function $f(x, s)  = \frac{1}{1 + \frac{1-x}{x}(s)} \,. $

Simple substitution, along with a useful recursive property of the Poisson-Binomial distribution,\footnote{\label{fn:recursion}Namely, 
\begin{equation*}
\p\left(\sum_j W_j = D \right) = p_i\times\p\left(\sum_{j \neq i} W_j = D-1\right) + (1-p_i)\times\p\left(\sum_{j \neq i} W_j = D\right) \,. 
\end{equation*} } makes it clear that
\begin{align}
\begin{split}
\sum_if(p_i, \phi_i) &= \sum_i\frac{1}{1 + \frac{1-p_i}{p_i} \phi_i}\\
& =  \sum_i\frac{1}{1 + \frac{1-p_i}{p_i}  \frac{\p\left(\sum_{j \neq i} W_j = D\right)}{\p\left(\sum_{j \neq i} W_j = D-1\right)} } \\
&= \sum_i\frac{p_i\times\p\left(\sum_{j \neq i} W_j = D-1\right) }{ p_i\times\p\left(\sum_{j \neq i} W_j = D-1\right) + (1-p_i)\times\p\left(\sum_{j \neq i} W_j = D\right) } \\
&=  \sum_i\frac{\p \left(W_i = 1, \sum_i W_i = D\right)}{\p\left(\sum_i W_i = D\right)} \\
& =\sum_i p_i^{\star}\\
&=D
\end{split}
\label{eq:indswing}
\end{align}

In words, Eq.~\ref{eq:indswing} shows that $\phi_i$ is precisely the ``shift" that turns each $p_i$ into the desired, recalibrated posterior probability $p_i^{\star}$. The logit shift, however, uses a constant $\alpha$ to approximate the vector of recalibrating shifts $\{\phi_i\}_{i \in \mathcal{V}}$.
 What remains, therefore, is to show that the value of $\alpha$ that solves Eq.~\ref{eq:basicSwing} is a very good approximation of $\phi_i$ for all values of $i$. 


To do so, we establish a couple of facts: that the value of $\alpha$ is bounded by the range of $\{\phi_i\}_{i \in \mathcal{V}}$, and that each $\phi_i$ in turn has well-defined bounds:

\begin{theorem}\label{thm:boundsOnAlpha}
The value of $\alpha$ which solves Equation \ref{eq:basicSwing} satisfies: 
\[ \min_i \frac{\p\left(\sum_{j \neq i} W_j = D\right)}{\p\left(\sum_{j \neq i} W_j = D-1\right)} \leq \alpha \leq \max_i \frac{\p\left(\sum_{j \neq i} W_j = D\right)}{\p\left(\sum_{j \neq i} W_j = D-1\right)}\,. \] 
\end{theorem}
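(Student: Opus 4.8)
The plan is to exploit the identity already established in Eq.~\ref{eq:indswing}, namely that the exact (nonconstant) shifts $\{\phi_i\}$ reproduce the target, $\sum_i f(p_i, \phi_i) = D$, together with the monotonicity of $h$ noted in the discussion of Eq.~\ref{eq:basicSwing}. The core observation is that $f(x, s) = \frac{1}{1 + \frac{1-x}{x} s}$ is strictly decreasing in its second argument: since $\frac{1-x}{x} > 0$ for $x \in (0,1)$, increasing $s$ enlarges the denominator and shrinks the value. Consequently $h(\alpha) = \sum_i f(p_i, \alpha)$ is itself strictly decreasing in $\alpha$.

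First I would evaluate $h$ at the two candidate endpoints $\phi_{\min} = \min_i \phi_i$ and $\phi_{\max} = \max_i \phi_i$. At the lower endpoint, $\phi_{\min} \leq \phi_i$ for every $i$, so the decreasingness of $f$ in its second argument gives the termwise inequality $f(p_i, \phi_{\min}) \geq f(p_i, \phi_i)$; summing over $i$ and invoking Eq.~\ref{eq:indswing} yields $h(\phi_{\min}) \geq \sum_i f(p_i, \phi_i) = D$. The symmetric argument at the upper endpoint, where $\phi_{\max} \geq \phi_i$, gives $h(\phi_{\max}) \leq D$.

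Finally, I would combine these two bounds with the continuity and strict monotonicity of $h$. Because $h$ is continuous and strictly decreasing with $h(\phi_{\min}) \geq D \geq h(\phi_{\max})$, the intermediate value theorem guarantees a unique $\alpha$ solving $h(\alpha) = D$, and that solution must lie in the interval $[\phi_{\min}, \phi_{\max}]$. Substituting the definition $\phi_i = \frac{\p\left(\sum_{j \neq i} W_j = D\right)}{\p\left(\sum_{j \neq i} W_j = D-1\right)}$ recovers exactly the claimed bounds.

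I do not anticipate a serious obstacle, as the argument is essentially a sandwiching driven by monotonicity. The only points requiring care are confirming the direction of monotonicity of $f$, and hence of $h$, so that the inequalities point the correct way, and ensuring the endpoints are well defined (the denominators $\p\left(\sum_{j \neq i} W_j = D-1\right)$ must be positive, which holds whenever $1 \leq D$ and the $p_i \in (0,1)$). A degenerate case worth a brief remark is when all the $\phi_i$ coincide: the bounds then collapse to a single point and $\alpha$ equals that common value exactly, so the logit shift is not merely an approximation but exact.
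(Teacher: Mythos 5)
Your proposal is correct and follows essentially the same route as the paper's own proof: both arguments rest on the termwise monotonicity of $f(x,s)$ in $s$, the identity $\sum_i f(p_i,\phi_i)=D$ from Eq.~\ref{eq:indswing}, and the resulting sandwich $h(\phi_{\max}) \leq D \leq h(\phi_{\min})$ forcing $\alpha$ to lie between $\min_i \phi_i$ and $\max_i \phi_i$. You simply spell out the endpoint evaluations and the intermediate value step that the paper leaves implicit.
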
 
\begin{proof}
The proof can be found in the Appendix, Section \ref{sec:prfThm1}. 
\end{proof}




\begin{theorem}\label{thm:boundsOnPhi}
For any choice of $i \in \mathcal{V}$, we have
\[ \frac{\p\left(\sum_{j \in \mathcal{V}} W_j = D + 1\right)}{\p\left(\sum_{j \in \mathcal{V}} W_j = D\right)}\leq \frac{\p\left(\sum_{j \neq i} W_j = D\right)}{\p\left(\sum_{j \neq i} W_j = D-1\right)}\leq \frac{\p\left(\sum_{j \in \mathcal{V}} W_j = D\right)}{\p\left(\sum_{j \in \mathcal{V}} W_j = D-1\right)} \] 
\end{theorem}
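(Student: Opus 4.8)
The plan is to reduce the entire two-sided inequality to a single structural property of the Poisson--Binomial law: log-concavity of its probability mass function. Write $S_{-i} = \sum_{j \neq i} W_j$ and $S = \sum_{j \in \mathcal{V}} W_j = S_{-i} + W_i$, and set $g(k) = \p(S_{-i} = k)$ and $G(k) = \p(S = k)$. Conditioning on $W_i \sim \Bern(p_i)$ gives the convolution identity $G(k) = (1-p_i)\,g(k) + p_i\,g(k-1)$, which is exactly the recursion used in Footnote~\ref{fn:recursion}. In this notation the middle quantity of the theorem is the consecutive ratio $\rho := g(D)/g(D-1)$, the right-hand quantity is $G(D)/G(D-1)$, and the left-hand quantity is $G(D+1)/G(D)$. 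So the theorem asserts $G(D+1)/G(D) \le \rho \le G(D)/G(D-1)$.

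First I would handle the right inequality. Substituting the convolution identity into numerator and denominator and dividing through by $g(D-1)$ yields
\[ \frac{G(D)}{G(D-1)} = \frac{(1-p_i)\,\rho + p_i}{(1-p_i) + p_i/\rho_{-}}, \qquad \rho_{-} := \frac{g(D-1)}{g(D-2)}. \]
Log-concavity of $S_{-i}$ says precisely that the consecutive ratios $g(k)/g(k-1)$ are non-increasing, so $\rho \le \rho_{-}$; hence $p_i/\rho_{-} \le p_i/\rho$, so the actual denominator is no larger than the one obtained by substituting $\rho$ for $\rho_{-}$, and the fraction is therefore \emph{at least} the value of that substituted expression. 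Clearing the inner fraction shows the substituted expression collapses algebraically to exactly $\rho$, giving $G(D)/G(D-1) \ge \rho$. The left inequality is symmetric: dividing the convolution identity for $G(D+1)$ and $G(D)$ by $g(D)$ gives
\[ \frac{G(D+1)}{G(D)} = \frac{(1-p_i)\,\rho_{+} + p_i}{(1-p_i) + p_i/\rho}, \qquad \rho_{+} := \frac{g(D+1)}{g(D)}, \]
and now log-concavity supplies $\rho_{+} \le \rho$, which bounds the numerator from above and collapses the right-hand side to exactly $\rho$, yielding $G(D+1)/G(D) \le \rho$.

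The only non-elementary ingredient --- and the step I expect to carry the real weight --- is establishing that $S_{-i}$ has a log-concave mass function, i.e. $g(k)^2 \ge g(k-1)\,g(k+1)$. I would obtain this from the classical fact that a finite convolution of log-concave sequences is itself log-concave: each $\Bern(p_j)$ has the two-term mass function on $\{0,1\}$, which is trivially log-concave, and $S_{-i}$ is their convolution. This is a standard property of P\'olya frequency ($\mathrm{PF}_2$) sequences and can either be cited directly or proved by a short induction on the number of summands. I would also record the mild regularity caveat that the ratios must be well defined: since each $p_j \in (0,1)$, the Poisson--Binomial mass function is strictly positive throughout its support, so $g(D-2), g(D-1), g(D) > 0$ for every $D$ in the interior of the relevant range, and the extreme boundary cases reduce to trivial $0 \le 0$ statements.
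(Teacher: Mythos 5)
Your proof is correct and follows essentially the same route as the paper's: both arguments rest on the log-concavity of the leave-one-out Poisson--Binomial mass function $g(k)=\p(S_{-i}=k)$ combined with the convolution recursion $G(k)=(1-p_i)g(k)+p_i\,g(k-1)$, differing only in that you organize the algebra through monotonicity of consecutive ratios while the paper manipulates the log-concavity inequality directly. Your explicit attention to well-definedness of the ratios at the boundary is a small bonus the paper omits.
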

\begin{proof}
The proof can be found in the Appendix, Section \ref{sec:prfThm2}.
\end{proof}
\subsection{Main Results}
The bounds from Theorem \ref{thm:boundsOnPhi} apply regardless of the choice of $i$, so we can combine the two theorems to find that
\begin{equation}\label{eq:double-bounds}
\begin{aligned}
 \frac{\p\left(\sum_{j \in \mathcal{V}} W_j = D + 1\right)}{\p\left(\sum_{j \in \mathcal{V}} W_j = D\right)} &\leq \min_i \frac{\p\left(\sum_{j \neq i} W_j = D\right)}{\p\left(\sum_{j \neq i} W_j = D-1\right)} \leq \alpha \\ & \leq  \max_i \frac{\p\left(\sum_{j \neq i} W_j = D\right)}{\p\left(\sum_{j \neq i} W_j = D-1\right)} \leq \frac{\p\left(\sum_{j \in \mathcal{V}} W_j = D\right)}{\p\left(\sum_{j \in \mathcal{V}} W_j = D-1\right)}.
\end{aligned}
\end{equation}

This is useful, because we can now use the outer bounds in Eq.~\ref{eq:double-bounds} to obtain a bound on the approximation error when estimating recalibrated scores $p_i^{\star}$ (obtained from the posterior update approach) via $\tilde p_i$ (obtained from the logit shift):

\begin{theorem}\label{thm:boundsOnError}
For large sample sizes, we obtain 
\[ \tilde p_i = p_i^{\star} + \mathcal{O} \left( \frac{1}{\sum_{j \in \mathcal{V}} p_j (1-p_j)} \right)\,. \] 
\end{theorem}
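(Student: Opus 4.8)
The plan is to reduce the theorem to two ingredients: the fact that the constant shift $\alpha$ and each unit-specific shift $\phi_i$ are trapped in a common, narrow interval, and the smoothness of the map $s \mapsto f(p_i, s)$. Writing $S = \sum_{j \in \mathcal{V}} W_j$ and recalling from Eq.~\ref{eq:indswing} that $p_i^{\star} = f(p_i, \phi_i)$ termwise, while $\tilde p_i = f(p_i, \alpha)$ by construction, the chain of inequalities in Eq.~\ref{eq:double-bounds} shows that both $\phi_i$ and $\alpha$ lie in the interval $[L, U]$ with
\[ L = \frac{\p(S = D+1)}{\p(S = D)}, \qquad U = \frac{\p(S = D)}{\p(S = D-1)}. \]
Consequently $|\alpha - \phi_i| \le U - L$ for every $i$, and the entire problem collapses to (i) converting this shift discrepancy into a discrepancy in probabilities and (ii) bounding the interval width $U - L$.

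For step (i), I would apply the mean value theorem to $f$ in its second argument: there is a $\zeta$ between $\alpha$ and $\phi_i$ with
\[ \tilde p_i - p_i^{\star} = \partial_s f(p_i, \zeta)\,(\alpha - \phi_i), \qquad \partial_s f(p_i, \zeta) = -\frac{\tfrac{1-p_i}{p_i}}{\left(1 + \tfrac{1-p_i}{p_i}\zeta\right)^2}. \]
Because both endpoints of $[L, U]$ are close to $1$ whenever $D$ is near the prior mean $\mu = \sum_j p_j$, the denominator is close to $p_i^{-2}$ and $|\partial_s f(p_i, \zeta)|$ is on the order of $p_i(1-p_i) \le \tfrac14$; in any case it is $\mathcal{O}(1)$ for $p_i$ bounded away from $0$ and $1$. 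Hence $|\tilde p_i - p_i^{\star}| \lesssim U - L$, uniformly in $i$.

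The crux --- and the step I expect to be the main obstacle --- is showing $U - L = \mathcal{O}(1/\sigma^2)$, where $\sigma^2 = \sum_{j \in \mathcal{V}} p_j(1-p_j)$ is the variance of $S$. Here I would invoke a local central limit theorem for the Poisson-Binomial law: its log-concavity guarantees $L \le U$, and a Gaussian/Edgeworth approximation gives $\p(S = k) \approx (2\pi\sigma^2)^{-1/2}\exp\!\big(-(k-\mu)^2/(2\sigma^2)\big)$, so that each neighboring ratio satisfies $\p(S = k)/\p(S = k-1) \approx \exp\!\big((\mu - k + \tfrac12)/\sigma^2\big)$. Substituting $k = D$ and $k = D+1$ yields
\[ U - L \;\approx\; \exp\!\Big(\tfrac{\mu - D}{\sigma^2}\Big)\left[\exp\!\Big(\tfrac{1}{2\sigma^2}\Big) - \exp\!\Big(-\tfrac{1}{2\sigma^2}\Big)\right] \;=\; \exp\!\Big(\tfrac{\mu - D}{\sigma^2}\Big)\cdot 2\sinh\!\Big(\tfrac{1}{2\sigma^2}\Big), \]
which is $\mathcal{O}(1/\sigma^2)$ since $2\sinh(x) \approx 2x$ for small $x$ and the prefactor is $\mathcal{O}(1)$ as long as $|\mu - D| = \mathcal{O}(\sigma^2)$ (amply satisfied when $D - \mu = \mathcal{O}(\sigma)$, the typical fluctuation scale). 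The delicate part is making the local CLT rigorous and uniform for a non-identically-distributed Bernoulli sum: I would control the remainder in the Gaussian approximation of the three adjacent pmf values, verifying that the error is of smaller order than $1/\sigma^2$ so that it does not contaminate the leading term. Combining with step (i) then gives $\tilde p_i - p_i^{\star} = \mathcal{O}(1/\sigma^2) = \mathcal{O}\big(1/\sum_{j \in \mathcal{V}} p_j(1-p_j)\big)$, as claimed.
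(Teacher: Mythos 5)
Your proposal follows essentially the same route as the paper's proof: both trap $\alpha$ and every $\phi_i$ in the interval $[L,U]$ given by the outer terms of Eq.~\ref{eq:double-bounds}, bound the width of that interval via a Gaussian local-limit approximation to the Poisson-Binomial pmf (the paper works with the ratio $U/L = \p(S=D)^2/\bigl(\p(S=D-1)\p(S=D+1)\bigr) \approx \exp(1/\sigma^2)$, in which the $\exp((\mu-D)/\sigma^2)$ prefactor you worry about cancels exactly), and then convert the shift discrepancy into a probability discrepancy by a first-order expansion of $f$ in its second argument with a uniformly bounded coefficient. The one step you single out as delicate --- making the local CLT rigorous for non-identically-distributed Bernoulli sums and controlling its remainder on ratios of adjacent pmf values --- is precisely the step the paper handles by invoking the $C_1/\sigma^2$ uniform bound of Siripraparat and Neammanee, so your plan matches the published argument in all essentials.
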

\begin{proof}
The proof can be found in the Appendix, Section \ref{sec:prfThm3}.

\end{proof}
Theorem~\ref{thm:boundsOnError} states that the error in using the logit shift approach to approximate the posterior recalibration update depends on the precision of the Poisson-Binomial distribution over sums of binary outcomes being aggregated (votes for the Democratic candidate, in our running example). As the variance of Poisson-Binomial deviates is maximal when all underlying probabilities are equal to 0.5, it follows that, in our running example, the approximation will perform best when voters in $\mathcal{V}$ are 
equally likely to vote for either party.  As voters become more heterogenous, or as their support becomes more lopsided (or both, as would be the case in heavily polarized electorates), the quality of the approximation is expected to suffer. Fortunately, the binding bounds in Eq.~\ref{eq:double-bounds} are extremely tight for large enough samples, so that even in the worst case-scenarios, the approximation can be expected to perform well. We now briefly illustrate our analytical results with a small Monte-Carlo simulation.

\section{Simulations}

We conduct a brief simulation study to empirically demonstrate the efficacy of this approach. We simulate using $1,000$ units, a sample size at which computations of the true $p_i^{\star}$ are possible. We draw the initial probabilities $p_i$ according to the six distributions discussed in \cite{biscarri2018simple}. We then consider the cases in which the observed $D$ is either 20\% above or 20\% below the expectation, $\sum_i p_i$. 

We compute the true probabilities using Biscarri's algorithm as implemented in the \textit{PoissonBinomial} package \citep{junge2020package}, and compare it against the estimates obtained using our heuristic method. We report the RMSE as well as the proportion of variance in the true $p_i^{\star}$ that is \emph{not} explained by our method. Results are given in Table \ref{tab:simulations}. Across all settings, our approximations perform extremely well. 

\begin{table}[bt]
\label{tab:simulations}
\centering
\begin{tabular}{llrccc}
\toprule
\textbf{$\boldsymbol{p_i}$ Setting}  & \textbf{Sampling Distribution} & \textbf{Observed D} & \textbf{RMSE} & \textbf{1 $-$ R$\boldsymbol{^2}$ } \\ \midrule
Uniform               & Uniform(0, 1)                             & $-$20\%               & 0.00022       & 5.58$\times 10^{-7}$                          \\
Uniform               & Uniform(0, 1)                             & 20\%                & 0.00021       & 5.46$\times 10^{-7}$                          \\
Close to Zero         & Beta(0.1, 3)                        & $-$20\%               & 0.00047       & 3.41$\times 10^{-5}$                          \\
Close to Zero         & Beta(0.1, 3)                        & 20\%                & 0.00039       & 1.85$\times 10^{-5}$                          \\
Close to One          & Beta(3, 0.1)                        & $-$20\%               & 0.00036       & 1.22$\times 10^{-6}$                          \\
Close to One          & Beta(3, 0.1)                        & 20\%                & --            & --                                \\
Extremal              & 0.5*Beta(0.1, 3) + 0.5*Beta(3, 0.1) & $-$20\%               & 0.00048       & 1.14$\times 10^{-6}$                          \\
Extremal              & 0.5*Beta(0.1, 3) + 0.5*Beta(3, 0.1) & 20\%                & 0.00048       & 1.12$\times 10^{-6}$                          \\
Central               & Beta(3, 3)                          & $-$20\%               & 0.00015       & 6.86$\times 10^{-7}$                          \\
Central               & Beta(3, 3)                          & 20\%                & 0.00015       & 6.86$\times 10^{-7}$                          \\
Bimodal               & 0.5*Beta(3, 10) + 0.5*Beta(10, 3)   & $-$20\%               & 0.00023       & 6.92$\times 10^{-7}$                          \\
Bimodal               & 0.5*Beta(3, 10) + 0.5*Beta(10, 3)   & 20\%                & 0.00023       & 6.86$\times 10^{-7}$ \\          \bottomrule
\end{tabular}
\caption{Approximation error, as measured by RMSE and $1 - R^2$, using our heuristic method vs. the true Poisson-Binomial probabilities, under various settings. No results are reported in the row in which $1.2 \times \sum_i p_i$ would exceed the sample size of $1,000$.}
\end{table}

\section{Discussion}

In this paper, we have considered the problem of updating voter scores to match observed vote totals from an election. We have shown that the relatively simple ``logit shift" algorithm is a very good approximation to computing the true conditional probability. This is an especially useful insight for campaign analysts and researchers alike, because the logit shift is significantly more efficient computationally than the calculation of the exact posterior recalibration update. 

 It is worth being explicit about the limitations of this approach. Under the posterior update model, we treat the original scores $p_i$ as a prior over Democratic vote probability. In turn, the updated scores $p_i^{\star}$ deviate from the initial scores only by assuming the observed vote tallies deviate from the expectation of $\sum_i p_i$ due to random error. Crucially, the updated probabilities retain the same ordering as the prior probabilities, which implies the original scoring model must discriminate positive and negative (but unobservable, in the case of voting) individual cases well. It is also important to note that the realization of $\mathcal{V}$ over which values of $D$ are defined can have an impact on the quality of the approximation: the approximation will be better when the number of Democratic votes $D$ tallies the choices of voters in very competitive districts than when it tallies votes in landslide ones, and choosing a level of aggregation with too few voters in it could render the error bounds too loose. In most practical instances, however, the logit shift can be expected to perform very well.

Hence, this approach represents a useful -- albeit crude -- method of updating individual-level scores to incorporate information from a completed election. More complex insights about the electorate, such as the marked underperformance of Democrats among Hispanics voters in the 2020 election, cannot be directly incorporated by computing the posterior probabilities (or their approximation via the logit shift). Methods based on ecological inference \citep[e.g.][]{king2004ecological} would be necessary to capture this structure. Such methods represent a promising potential extension of the insights provided in this manuscript.

\bibliographystyle{apalike} 
\bibliography{refs}      

\begin{thebibliography}{}

\bibitem[Biscarri et~al., 2018]{biscarri2018simple}
Biscarri, W., Zhao, S.~D., and Brunner, R.~J. (2018).
\newblock A simple and fast method for computing the poisson binomial
  distribution function.
\newblock {\em Computational Statistics \& Data Analysis}, 122:92--100.

\bibitem[Boyd et~al., 2004]{boyd2004convex}
Boyd, S., Boyd, S.~P., and Vandenberghe, L. (2004).
\newblock {\em Convex optimization}.
\newblock Cambridge University Press.

\bibitem[Butler, 1951]{butler1951appendix}
Butler, D.~E. (1951).
\newblock Appendix to the british general election of 1950. ed. h. g. nichols.

\bibitem[Chen and Liu, 1997]{chen1997statistical}
Chen, S.~X. and Liu, J.~S. (1997).
\newblock Statistical applications of the poisson-binomial and conditional
  bernoulli distributions.
\newblock {\em Statistica Sinica}, pages 875--892.

\bibitem[Ghitza and Gelman, 2013]{ghitza2013deep}
Ghitza, Y. and Gelman, A. (2013).
\newblock Deep interactions with mrp: Election turnout and voting patterns
  among small electoral subgroups.
\newblock {\em American Journal of Political Science}, 57(3):762--776.

\bibitem[Ghitza and Gelman, 2020]{ghitza2020voter}
Ghitza, Y. and Gelman, A. (2020).
\newblock Voter registration databases and mrp: Toward the use of large-scale
  databases in public opinion research.
\newblock {\em Political Analysis}, 28(4):507--531.

\bibitem[Hanretty et~al., 2016]{hanrettyetal2016}
Hanretty, C., Lauderdale, B., and Vivyan, N. (2016).
\newblock Combining national and constituency polling for forecasting.
\newblock {\em Electoral Studies}, 41:239--243.

\bibitem[Junge, 2020]{junge2020package}
Junge, F. (2020).
\newblock Package ‘poissonbinomial’.
\newblock {\em Computational Statistics \& Data Analysis}, 59:41--51.

\bibitem[King et~al., 2004]{king2004ecological}
King, G., Tanner, M.~A., and Rosen, O. (2004).
\newblock {\em Ecological inference: New methodological strategies}.
\newblock Cambridge University Press.

\bibitem[Kullback and Leibler, 1951]{kullback1951information}
Kullback, S. and Leibler, R.~A. (1951).
\newblock On information and sufficiency.
\newblock {\em The annals of mathematical statistics}, 22(1):79--86.

\bibitem[Olivella and Shiraito, 2017]{olivella2017poisbinom}
Olivella, S. and Shiraito, Y. (2017).
\newblock poisbinom: A faster implementation of the poisson--binomial
  distribution. r package version 1.0. 1.

\bibitem[Schwenzfeier, 2019]{schwenzfeierPolMeth}
Schwenzfeier, M. (2019).
\newblock Which non-responders drive non-response bias?
\newblock In {\em PolMeth XXXVI}, Cambridge, MA.

\bibitem[Siripraparat and Neammanee, 2021]{siripraparat2021local}
Siripraparat, T. and Neammanee, K. (2021).
\newblock A local limit theorem for poisson binomial random variable.
\newblock {\em Sci. Asia. https://doi. org/10.2306/scienceasia1513-1874.2021},
  6.

\bibitem[Wang, 1993]{wang1993number}
Wang, Y.~H. (1993).
\newblock On the number of successes in independent trials.
\newblock {\em Statistica Sinica}, pages 295--312.

\end{thebibliography}

\appendix

\section{Characterizations of the Logit Shift}\label{sec:char}

First, we show that the logit shift minimizes the summed KL divergence between $\tilde p_i$ and $p_i$, subject to $\sum \tilde p_i = D$ constraint. 

Define the minimum-KL-divergence optimization problem as 
\begin{equation}\label{optProb}
    \begin{aligned}
     \text{minimize} & \hspace{5mm} \sum_{i \in \mathcal{V}}- \tilde x_i \log\left(\frac{p_i}{\tilde x_i} \right) - (1 - \tilde x_i) \log \left( \frac{1-p_i}{1- \tilde x_i} \right)  \\
     \text{subject to} & \hspace{5mm} \sum_{i \in \mathcal{V}} \tilde x_i = D, \hspace{3mm} 0 \leq \tilde x_i \leq 1 \text{ for } i \in \mathcal{V}.  
    \end{aligned}
\end{equation}
The Lagrangian for Optimization Problem \ref{optProb} is 
\begin{align*}
L\left( \{\tilde x_i\}, \{\lambda_i\}, \{\nu_i\}, \gamma\right) &=  \sum_{i \in \mathcal{V}} - \tilde x_i \log\left(\frac{p_i}{\tilde x_i} \right) - (1 - \tilde x_i) \log \left( \frac{1-p_i}{1- \tilde x_i} \right) +  \\ & \sum_{i \in \mathcal{V}} \lambda_i (\tilde x_i - 1) - \nu_i \tilde x_i + \gamma \left( \sum_{i \in \mathcal{V}} \tilde x_i - D \right) \,.
\end{align*} 
We make the standard assumptions that $0 < p_i < 1$ for all $i \in \mathcal{V}$ and $0 < D < |\mathcal{V}|$. Define the point $\left( \{\tilde x_i\}, \{\lambda_i\}, \{\nu_i\}, \gamma\right) = \left( \{\tilde p_i\}, \{0\}, \{0\}, \log(\alpha) \right)$. We consider the Karush-Kuhn-Tucker (KKT) conditions at this point. For the Lagrangian gradient condition, observe:  
\begin{align*}
\nabla L\left( \{\tilde p_i\}, \{0\}, \{0\}, \log(\alpha)\right) & = -\log \left(\frac{p_i/(1-p_i) }{\tilde p_i/(1-\tilde p_i)}\right) + \log(\alpha) \\
&= -\log \left(\frac{p_i/(1-p_i) }{p_i/(\alpha(1-p_i))}\right) + \log(\alpha) \\
&= 0\,,
\end{align*}
while the other four KKT conditions are automatically satisfied at this point. It follows that this point is dual optimal. Lastly, because the objective function is convex and there exist choices of $x_i$ satisfying $0 < x_i < 1$ for $i \in \mathcal{V}$ and $\sum_{i \in \mathcal{V}} x_i = D$, strong duality is attained. Hence, our point is optimal and the $\tilde p_i$ are a solution to Optimization Problem \ref{optProb}. For background technical details, see \cite{boyd2004convex}. 

Next, we show that this procedure is equivalent to an intercept shift if the $p_i$ are defined based on a logistic regression, i.e. 
\[ p_i = \frac{\exp(\beta_0 + X_i\beta)}{1 + \exp(\beta_0 + X_i\beta)\,.} \]
where $\beta_0 \in \mathbb{R}$ is the intercept, $X_i \in \mathbb{R}^p$ is the covariate vector for unit $i$, and $\beta \in \mathbb{R}^p$ is the coefficient vector. Plugging this expression into the definition of $\tilde p_i$, we obtain \begin{align*}
    \tilde p_i &= \frac{1}{1 + \exp(-\beta_0 - X_i \beta)\alpha} \\
    &= \frac{1}{1 + \exp(-(\beta_0 - \log(\alpha)) - X_i \beta)} \\
    &= \frac{\exp(\beta_0 - \log(\alpha) + X_i\beta)}{1 + \exp(\beta_0 - \log(\alpha) + X_i\beta)} \,,
\end{align*}
where on the last line we see that $\tilde p_i$ is formulated as a logistic regression with an identical coefficient vector and its intercept shifted by $-\log(\alpha)$. 

\section{Proof of Theorem \ref{thm:boundsOnAlpha}}\label{sec:prfThm1}
For a fixed choice of $\boldsymbol x$, observe that $g(\boldsymbol x, \boldsymbol s)$ is monotonically decreasing in every component of $s$. Denote $\boldsymbol \alpha = \{\alpha\}_{i \in \mathcal{V}}$, the vector repeating $\alpha$ a total of $|\mathcal{V}|$ times. Because
\[ g(\boldsymbol p, \boldsymbol \phi) = D \hspace{5mm} \text{ and } \hspace{5mm} g(\boldsymbol p, \boldsymbol \alpha) = D, \] 
it follows immediately that $\alpha$ must lie between the largest and smallest value of $\phi_i$ across all choices of $i$. 

\section{Proof of Theorem \ref{thm:boundsOnPhi}}\label{sec:prfThm2}
The log-concavity of the Poisson Binomial distribution is a well-established result \citep[see e.g.][]{wang1993number}. Hence, for any choice of $i$, we have 
\begin{equation}\label{eq:logConcave}
     \p\left(\sum_{j \neq i}W_j = D - 2 \right) \p\left(\sum_{j \neq i}W_j = D  \right) \leq \p\left(\sum_{j \neq i}W_j = D - 1 \right)^2 
\end{equation}
Multiplying both sides of the equality by $p_i$, and adding the same quantity to both sides, we obtain an updated inequality 
\begin{align*}
&p_i \p\left(\sum_{j \neq i}W_j = D - 2 \right) \p\left(\sum_{j \neq i}W_j = D  \right) + (1- p_i) \p\left(\sum_{j \neq i}W_j = D - 1 \right) \p\left(\sum_{j \neq i}W_j = D  \right) \leq \\
&p_i \p\left(\sum_{j \neq i}W_j = D - 1 \right)^2 + (1- p_i) \p\left(\sum_{j \neq i}W_j = D - 1 \right) \p\left(\sum_{j \neq i}W_j = D  \right).
\end{align*} 
Collecting terms, we get 
\begin{equation}\label{bigIneq}
\begin{aligned}
& \p\left(\sum_{j \neq i}W_j = D  \right) \left( p_i \p\left(\sum_{j \neq i}W_j = D - 2 \right) + (1- p_i) \p\left(\sum_{j \neq i}W_j = D -1  \right)\right) \leq \\ & \p\left(\sum_{j \neq i}W_j = D - 1 \right) \left(p_i \p\left(\sum_{j \neq i}W_j = D - 1 \right) +  (1- p_i) p\left(\sum_{j \neq i}W_j = D  \right)\right) 
\end{aligned}
\end{equation}
The terms in parentheses can be collapsed into a single Poisson Binomial probability, making use of the recursion defined in Footnote 1
. Subbing these expressions into Inequality \ref{bigIneq}, we obtain
\[ \p\left(\sum_{j \neq i}W_j = D  \right)  \p\left(\sum_{j \in \mathcal{V}}W_j = D -1  \right)  \leq \p\left(\sum_{j \neq i}W_j = D - 1 \right)  \p\left(\sum_{j \in \mathcal{V}}W_j = D  \right) \,, \] 
which yields the upper bound in Theorem 2.

The proof of the lower bound proceeds by incrementing $D$ by 1 in Inequality \ref{eq:logConcave} and following the same set of steps.

\section{Proof of Theorem \ref{thm:boundsOnError}}\label{sec:prfThm3}
Under the bounds defined in Inequality 4 in the main text,
we obtain the following approximation bounds for $\alpha_i$: 
\[ \frac{|\alpha - \phi_i|}{\phi_i} \leq \frac{ \frac{\p\left(\sum_{j \in \mathcal{V}} W_j = D\right)}{\p\left(\sum_{j \in \mathcal{V}} W_j = D-1\right)} - \frac{\p\left(\sum_{j \in \mathcal{V}} W_j = D + 1\right)}{\p\left(\sum_{j \in \mathcal{V}} W_j = D\right)}}{\frac{\p\left(\sum_{j \in \mathcal{V}} W_j = D + 1\right)}{\p\left(\sum_{j \in \mathcal{V}} W_j = D\right)}} \,.\]
It is a well-established result that, for large $N$, the Poisson-Binomial behaves approximately as a Normal random variable with the same mean and variance \citep[see e.g.][]{siripraparat2021local}, namely 
\[ \mu = \sum_j p_j \hspace{5mm} \text{ and } \hspace{5mm} \sigma^2 = \sum_j p_j(1-p_j). \] 
Denote as $\psi(d)$ the density of a Normal distribution $\mathcal{N}(\mu, \sigma^2)$ with this mean and variance, evaluated at $d$. \cite{siripraparat2021local} show that over all possible choices of $d$, the largest deviation between $\psi(d)$ and $\p\left(\sum_{j \in \mathcal{V}} W_j = d\right)$ is bounded above by $C_1/\sigma^2$ for a constant $C_1 > 0$. Hence 
\begin{equation}\label{boundOnAlpha}
\begin{aligned}
 \frac{|\alpha - \phi_i|}{\phi_i} &\leq \frac{\psi(D)^2}{\psi(D-1)\psi(D+1)} - 1 + \mathcal{O}\left(\frac{1}{\sigma^2} \right) \\
&= \exp\left(\frac{1}{\sigma^2} \right) - 1 + \mathcal{O}\left(\frac{1}{\sigma^2} \right) = \mathcal{O} \left( \frac{1}{\sigma^2} \right) 
\end{aligned}
\end{equation}

Lastly, we observe 
\begin{align*}
\tilde p_i &= \frac{1}{1 + \frac{1-p_i}{p_i} \alpha} \\
 &= \frac{1}{1 + \frac{1-p_i}{p_i} \phi_i \left( 1 + \frac{\alpha - \phi_i}{\phi_i}\right)} \\
&= \frac{1}{1 + \frac{1-p_i}{p_i} \phi_i} - \frac{(1-p_i)p_i \phi_i}{(p_i +
\phi_i - p_i \phi_i)^2} \left( \frac{\alpha - \phi_i}{\phi_i}\right) + \mathcal{O} \left(\left(\frac{\alpha - \phi_i}{\phi_i} \right)^2 \right) \\
&= p_i^{\star} + \mathcal{O}\left(\frac{1}{\sigma^2}\right)\,,
\end{align*}
where the last line follows by plugging in the bound on $(\alpha - \phi_i)/\phi_i$ from Inequality \ref{boundOnAlpha} and observing
\[ \left| \frac{(1-p_i)p_i \phi_i}{(p_i + \phi_i - p_i \phi_i)^2}\right| \leq \frac{1}{4} \hspace{5mm} \text{for $0 < p_i < 1, 0 < \phi_i$.} \]

\end{document}